\documentclass[%
 reprint,
superscriptaddress,
 amsmath,amssymb,
 aps,
]{revtex4-2}

\usepackage{graphicx}
\usepackage{dcolumn}
\usepackage{bm}

\usepackage{hyperref}
\usepackage{amsfonts, amsmath, amsthm, amssymb} 
\usepackage{braket}
\usepackage{float}
\usepackage{amsthm}
\usepackage{csquotes}
\usepackage{ragged2e} 

\newtheorem{lemma}{Lemma}

\usepackage{xcolor}

\newcommand{\tr}[1]{\text{tr}\left(#1\right)}

\usepackage{bbold}
\usepackage{overpic}
\begin{document}

\title{Certifying ergotropy under partial information}
\author{Egle Pagliaro}
\email{egle.pagliaro@icfo.eu}
\affiliation{ICFO - Institut de Ciencies Fotoniques, The Barcelona Institute of Science and Technology, 08860 Castelldefels, Barcelona, Spain}
\author{Leonardo Zambrano}
\affiliation{ICFO - Institut de Ciencies Fotoniques, The Barcelona Institute of Science and Technology, 08860 Castelldefels, Barcelona, Spain}
\author{Mir Alimuddin}
\affiliation{ICFO - Institut de Ciencies Fotoniques, The Barcelona Institute of Science and Technology, 08860 Castelldefels, Barcelona, Spain}
\author{Alioscia Hamma}
\affiliation{Physics Department E. Pancini, Università degli Studi di Napoli Federico II, Complesso Universitario Monte S. Angelo, Via Cintia, 80126 Napoli, Italy
}
\affiliation{Istituto Nazionale di Fisica Nucleare (INFN), Sezione di Napoli, Complesso Universitario Monte Sant’Angelo, Via Cintia, 80126 Napoli, Italy
}
\author{Antonio Ac\'in}
\affiliation{ICFO - Institut de Ciencies Fotoniques, The Barcelona Institute of Science and Technology, 08860 Castelldefels, Barcelona, Spain}
\affiliation{ICREA, Passeig Lluis Companys 23, 08010 Barcelona, Spain}
\author{Donato Farina}
\email{donato.farina@unina.it}
\affiliation{Physics Department E. Pancini, Università degli Studi di Napoli Federico II, Complesso Universitario Monte S. Angelo, Via Cintia, 80126 Napoli, Italy
}
\affiliation{Istituto Nazionale di Fisica Nucleare (INFN), Sezione di Napoli, Complesso Universitario Monte Sant’Angelo, Via Cintia, 80126 Napoli, Italy
}

\date{\today}

\begin{abstract}
Ergotropy, the maximum work extractable from a quantum system, is a central resource in quantum physics. Computing ergotropy is well established when the system state is fully known, but its estimation under partial information remains an open problem. Here we introduce a general certification framework that lower bounds ergotropy using only the expectation values of a limited set of arbitrary observables. The method naturally applies in the finite-statistics regime, yielding confidence-certified bounds that explicitly incorporate shot noise. We benchmark our approach on both synthetic data and experimental measurements from an IBM quantum processor. This establishes a robust and experimentally accessible tool for certifying extractable work in realistic quantum settings.

\end{abstract}

\maketitle

{\it Introduction.---}%
Quantum thermodynamics seeks to understand how quantum systems store, transform, and exchange energy~\cite{binder2018thermodynamics,vinjanampathy2016quantum,goold2016role,landi2021irreversible,deffner2019quantum,francica2022role}. 
Beyond its foundational interest, this field underpins emerging technologies such as quantum batteries~\cite{alicki2013quantum,binder2015quantacell,campaioli2017colloquium,shi2022entanglement,le2018spin,friis2018precision,quach2022superabsorption,Yang2023} and quantum thermal machines~\cite{kosloff2014quantum,uzdin2015equivalence,hofer2015quantum,pekola2015towards,brandner2020thermodynamic}.
A central question in this context is how much work can be extracted from a quantum system, thereby identifying work-resourceful states and distinguishing them from passive, resourceless ones \cite{RevModPhys.91.025001, gour2024resources}.
This resource is typically quantified by the \emph{ergotropy}~\cite{allahverdyan2004maximal}, the maximum energy extractable by unitary operations and representing the ultimate ``charge'' that a quantum battery can deliver in an ideal, entropy-preserving process.

While ergotropy is analytically computable when both the quantum state and the Hamiltonian are perfectly known in their spectral decompositions~\cite{allahverdyan2004maximal}, this assumption is hardly realistic in practical settings. 
Experimentally, any attempt to characterize a many-body system is limited by measurement constraints, noise, and finite sampling. Full state tomography quickly becomes infeasible~\cite{haah2017sample,guta2020fast}, meaning that the state from which work is to be extracted is generally only {partially characterized}. 
In such informationally incomplete settings, even determining whether any work can be extracted is a nontrivial task. 
Recent works have explored ergotropy under incomplete information, including notions such as observational and Boltzmann ergotropy~\cite{vsafranek2023work}, and analyses of work extraction in random or partially known states~\cite{chakraborty2025sample, hovhannisyan2024concentration, canzio2025extracting}. 
However, a general and {certified} method for bounding ergotropy directly from partial measurement data, avoiding full tomography, remains lacking. 
This limitation hinders our ability to realistically quantify and benchmark quantum energy extraction protocols in experiments. 

\begin{figure}
\centering
\includegraphics[width=0.75\linewidth]{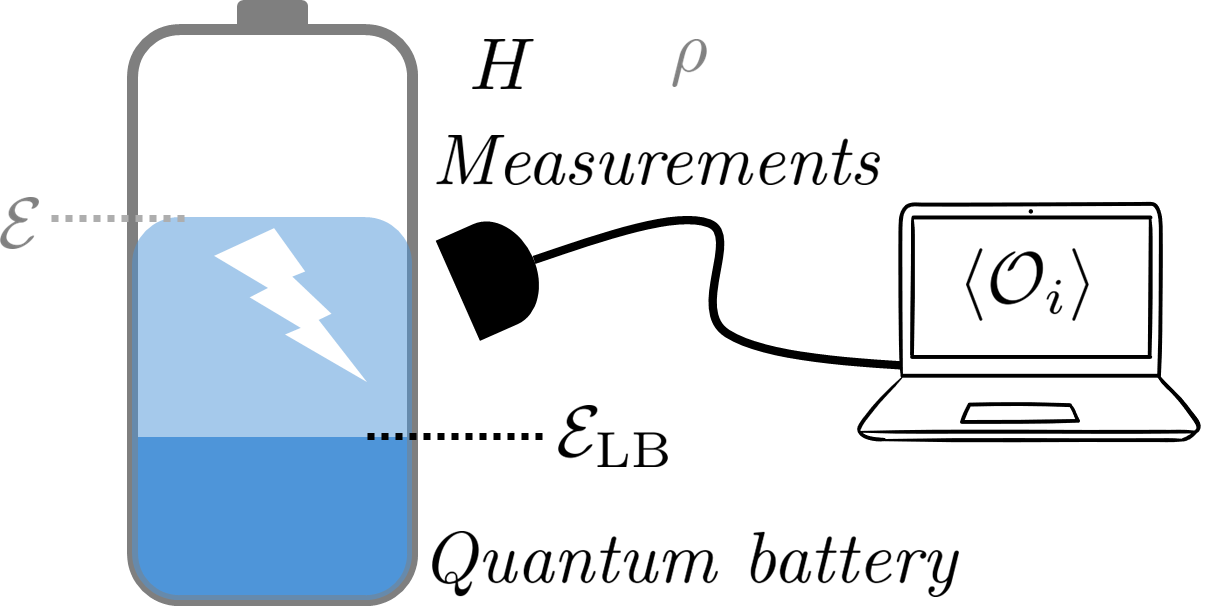}
\caption{ 
Schematic illustrating the problem under consideration. We assume the Hamiltonian $H$ of the system is fully characterized and we want to find a lower bound $\mathcal{E}_{\rm LB}$ for the unknown ergotropy $\mathcal{E}$ under partial information on the \textit{true} state $\rho$. This is obtained through an informationally incomplete set of measurements, {defined by a set of expectation values $\{ \langle \mathcal{O}_i \rangle = {\rm tr}(\rho \mathcal{O}_i)\}_{i\,\in \mathcal{I}}$}.}
\label{fig:cartoon}
\end{figure}

In this work we fill this gap by introducing a framework to {certify ergotropy under partial information}.
Our method provides a rigorous lower bound to this resource compatible with an arbitrary set of measured observables, requiring no assumptions beyond the knowledge of the Hamiltonian (see Fig.\,\ref{fig:cartoon}).
The bound is computed through a {two-step semidefinite programming protocol}, consistent with the available data.
Semidefinite programs (SDPs) are peculiar convex optimization problems that minimize linear objective functions~\cite{vandenberghe1996semidefinite, boyd2004convex}. They converge to the global optimum, a property that makes them particularly suitable for certification tasks in quantum information {theory}~\cite{skrzypczyk2023semidefinite}.
They are efficiently solvable, with computation time scaling polynomially in the number of variables and constraints.
Our approach leverages these aspects and naturally extends to realistic finite statistics scenarios leading to bounds that
hold with a prescribed confidence level.
Its efficacy is demonstrated on both synthetic and real data, yielding an experimentally accessible tool to certify the presence of extractable work in quantum settings.

{\it Framework.---}%
We start by briefly introducing the concepts and notation required to formalize these results.
For a finite-dimensional system in a state $\rho$ with Hamiltonian $H$, ergotropy is defined as the maximum energy that can be extracted via unitary operations,
\begin{equation}
\label{def:ergotropy}
    \mathcal{E}(\rho, H) := \max_{U \in \mathcal{U}(d)} \left[ \tr{\rho H} - \tr{H U \rho U^\dag} \right],
\end{equation}
with $\mathcal{U}(d)$ denoting the set of unitaries on a Hilbert space of dimension $d$.
When both $\rho$ and $H$ are known, this optimization admits a closed-form solution~\cite{allahverdyan2004maximal}. 
Let $\rho= \sum_{j} r_{j} \ket{r_{j}}\bra{r_{j}} $ be the spectral decomposition of $\rho$, with $r_{j+1}\leq r_{j}$, and $H=\sum_{j} E_{j} \ket{E_j}\bra{E_j} $ the one of the Hamiltonian, with $E_{j+1} \geq E_{j}$. The optimal unitary is 
\begin{equation}
    U_{\star} = \sum_j \ket{E_j}\bra{r_j},
    \label{optimalU}
\end{equation}
which maps $\rho$ to its associated \emph{passive state},
$\rho_p=U_{\star}\rho U_{\star}^\dag=\sum_j r_j|E_j\rangle\!\langle E_j|${, diagonal in the energy eigenbasis and with eigenvalues in decreasing order associated
to increasing energies.} This yields a compact expression for the ergotropy in terms of the spectral decompositions, {see Eq.~\eqref{def:ergotropy}},
\begin{equation}
\mathcal{E}(\rho, H) =
\sum_{i,j} E_i r_j \,|\braket{E_i|r_j}|^2 
-
\sum_i r_i E_i.
\end{equation}


{{\it Setup.---}%
In this work we assume that the Hamiltonian $H$ is fully known. However, the state $\rho$ is only partially known, having available only a finite set of expectation values of observables 
$\{\mathcal{O}_i\}_{i\in\mathcal{I}}$,
\begin{equation}
\label{constraints-meas}
    o_i = \tr{\rho \, \mathcal{O}_i}, \qquad i \in \mathcal{I}.
\end{equation}
Here, $\mathcal{I}$ is an index set with cardinality $K := |\mathcal{I}|$, the number of measured observables.
The information \eqref{constraints-meas}, together with state positivity and the trace-one condition, define the \emph{feasible set} of compatible states through the appropriate constraints,
\begin{equation}\label{eq:feasible_set}
\Omega_{\mathcal{I}} = \{ X : X \succeq0,\ {\rm tr}(X) = 1,\ {\rm tr}(X \mathcal{O}_i) = o_i\ \forall i \in \mathcal{I} \}.
\end{equation}
This set is convex and guaranteed to contain the true state $\rho$. Our goal is to certify a nontrivial lower bound of ergotropy over the above feasible set.}

{While certifying ergotropy from incomplete data is generally nontrivial, partial analytical progress is possible in restricted settings (see Appendix C and \cite{suppmat}). For instance, Canzio \textit{et al.} recently derived explicit lower bounds when the {only} measured observable is the Hamiltonian~\cite{canzio2025extracting}.
A key simplifying feature behind such analytic results is that the available data fix the mean energy, i.e., $\tr{HX}$ is constant over $X\in\Omega_{\mathcal I}$: in that case, minimizing ergotropy reduces to maximizing the passive energy, effectively a classical problem.  
Notably, if one can additionally access the energy eigenbasis measurement, the bound may even be tightened to the incoherent ergotropy (see Appendix C) through our approach.
However, outside these settings the feasible set typically spans different energies, so the minimization must balance $\tr{HX}$ against the passive contribution; with finite statistics, this is further complicated because the constraints defining $\Omega_{\mathcal I}$ hold only within confidence regions.
This additionally motivates addressing the problem in full generality, as we show below.}
\begin{figure}
\centering
\includegraphics[width=.9\linewidth]{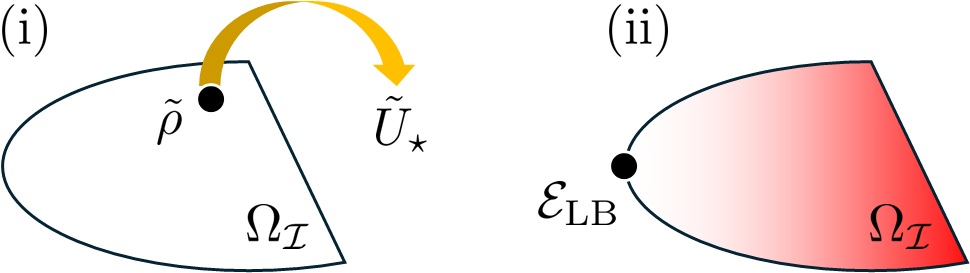}
\caption{ 
Schematic representation of the two-step procedure for determining the ergotropy lower bound. 
{In step (i), we select a state compatible with the measurement constraints and construct the corresponding optimal unitary for work extraction. Keeping this unitary fixed, step (ii) determines the worst-case scenario by minimizing the energy difference over the state. This procedure yields the desired lower bound on the ergotropy.}}
\label{fig:scheme}
\end{figure}

{\it Certification of ergotropy: two-step protocol.---}%
The smallest ergotropy compatible with the available information can be expressed in the following minimax form:
\begin{align}\label{eq:bound_ergotropy_minimax}
\bar{\mathcal{E}}_{\rm LB}=
\min_{X \in \Omega_{\mathcal{I}}} \max_{U \in \mathcal{U}(d)} \big[\tr{H X} - \tr{H U X U^\dag}\big]\,,
\end{align}
Although the problem is non-convex due to the maximization over unitaries, it can be relaxed into a tractable form.
We notice that, by construction, any unitary \( U \) provides a lower bound on the quantity\,\eqref{eq:bound_ergotropy_minimax}, and therefore for the ergotropy,
namely, for any given $U$,
\begin{align}
\min_{X \in \Omega_{\mathcal{I}}} & \left[\tr{H X} - \tr{H U X U^\dag}\right]
\leq
\bar{\mathcal{E}}_{\rm LB}
\leq \mathcal{E}\,.
\label{trivial-bound}
\end{align}
Starting from this observation, we design our practical protocol schematically depicted in Fig.\,\ref{fig:scheme}. 
It consists of two steps: (i)\,we select a suitable unitary; then, (ii)\,we compute the corresponding lower bound by minimizing over all quantum states in \(\Omega_{\mathcal{I}}\) while keeping the unitary fixed. 
Specifically, the protocol is structured as follows:
\begin{enumerate}
    \item[(i)] We solve the SDP
    \begin{eqnarray}
    \label{sdp}
    &&\tilde{\rho} = {\rm argmin}\,\, \ell(X)\\
    &&\hspace{.7cm} X \in \Omega_{\mathcal{I}}, \nonumber
    \end{eqnarray}
    where \(\ell\) is any linear function (or linearizable function such as state purity\,\cite{suppmat}). 
    Essentially,
    this step selects a state within the feasible set \(\Omega_{\mathcal{I}}\). Then, from the spectral decomposition $\tilde{\rho} = \sum_i \tilde{r}_i \ket{\tilde{r}_i}\bra{\tilde{r}_i}$, 
    we compute the unitary,
    \begin{equation}
    \label{Utildestar}
    \tilde{U}_{\star} = \sum_j \ket{E_j}\bra{\tilde{r}_j}, 
    \end{equation}
    optimal in terms of ergotropy for $\tilde\rho$, cf. Eq.~\eqref{optimalU}.
    
    \item[(ii)] Fixing \(\tilde{U}_{\star}\) from the previous step, we solve the SDP
    \begin{eqnarray}
    \label{ergoLB}
    &&\mathcal{E}_{\rm LB} = \min_{X \in \Omega_{\mathcal{I}}} \left[\tr{H X} - \tr{H \tilde{U}_{\star} X \tilde{U}_{\star}^\dag}\right].
    \end{eqnarray}
    Since \(\tilde{U}_{\star}\) is unitary and because of Eq.~\eqref{trivial-bound}, this quantity provides a certified lower bound for the ergotropy,
    \begin{equation}
    \label{eq:bound}
    \mathcal{E}_{\rm LB} \leq \mathcal{E}.
    \end{equation}
\end{enumerate}
Several important properties follow directly from the construction of the bound
$\mathcal{E}_{\rm LB}$ in Eq.~\eqref{ergoLB}.
First, \emph{convergence} is guaranteed as the amount of available information increases.
When the measurement data become informationally complete, the feasible set
$\Omega_{\mathcal{I}}$ collapses to a single element, namely the true state $\rho$,
and the protocol recovers the true ergotropy,
$\mathcal{E}_{\rm LB} = \mathcal{E}(\rho,H)$.
Second, in the opposite regime of severe informational incompleteness, the feasible set
may be so large that the optimization in Eq.~\eqref{ergoLB} yields a negative value.
Since ergotropy is by definition non-negative, we then assign the \emph{trivial bound}
$\mathcal{E}_{\rm LB}=0$, corresponding to the  {impossibility of certifying the presence of} extractable work.
Third, although one would intuitively expect the certified bound to improve monotonically
as more information is supplied, this behavior is not automatically guaranteed by a literal
application of steps (i) and (ii), since the unitary is recomputed at each stage.
However, \textit{monotonicity} can be enforced
by augmenting the protocol with a simple conditional criterion: the unitary is updated in step (i) only if it yields a strictly tighter bound than the previous iteration; otherwise, the previous unitary is retained (see more details in Appendix A). 

{\it Finite statistics.---}%
Finally, the framework can be adapted to include finite statistics effects. 
In realistic experiments, the expectation values of the observables \(\mathcal{O}_i\) are not known exactly but are estimated from a finite number of measurement shots.
Each measurement shot requires an independent copy of the system, so that the total number of shots quantifies the sample complexity of the problem.
Let \(o_i^{\rm (est)}\) denote the empirical average obtained from \(N_i\) measurement outcomes.  
To account for statistical fluctuations, we adapt the framework of Ref.~\cite{zambrano2024certification} to the problem of ergotropy certification, and modify the feasible set so that it contains, with high probability, all states compatible with the measurement data.  
Specifically, the feasible set \(\Omega_{\mathcal{I}}\) becomes 
\begin{eqnarray}
\label{feasible-set-conc-ineq}
&&\Omega_{\mathcal{I}} :=
\{ X : X \succeq 0,\, {\rm tr}(X)=1, \\
&&\hspace{1cm} \vert \tr{X \mathcal{O}_i} - o_i^{\rm (est)} \vert \leq \epsilon_i \,\, \forall i \in \mathcal{I}\}\,.
\nonumber
\end{eqnarray}
Assuming, for simplicity, eigenvalues in the interval \([-1,1]\) for the observables, the deviation is given by
\begin{equation}
\epsilon_i := \sqrt{\frac{2 \log(2K/\delta)}{N_i}}.
\label{eq:epsilon}
\end{equation}
Here, $K=|\mathcal{I|}$ as before, and \(1 - \delta\) is the desired confidence level.  
As a consequence, the bound obtained from Eq.~\eqref{eq:bound} holds now with a probability of at least \(1 - \delta\).  
This provides a rigorous way to incorporate finite-statistics effects in our problem. A detailed derivation of Eq.~\eqref{eq:epsilon} based on Hoeffding’s inequality and the union bound is given in Appendix~B.
%
%
%
%
%


%
%
%
%
%
%

%
\begin{figure}
    \centering
    \includegraphics[width=1\linewidth]{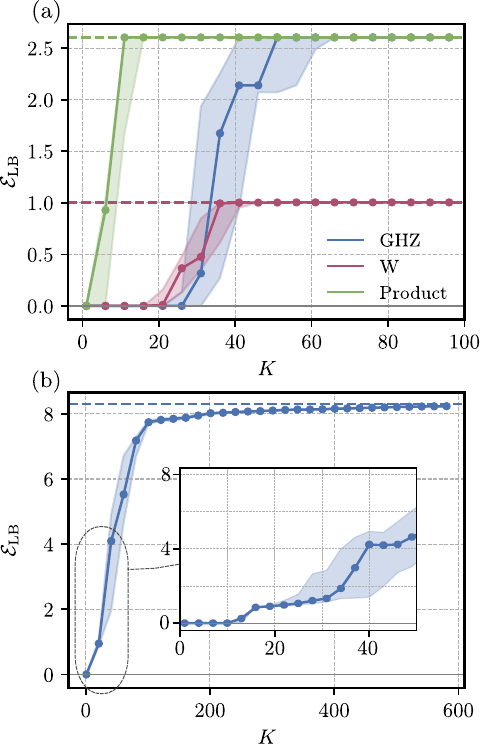}
    \caption{(a)\,Certified ergotropy lower bounds 
as a function of the number  $K$ of measured observables (random Pauli strings).
We considered the true states
\eqref{pure-states}, GHZ (blue), product (green), and W (red), and XXZ Hamiltonian ($J_1=1$, $J_y=1$, $\Delta=0.5$, $B=0$).
(b)\,Same as in (a) but for the  negative-temperature state \eqref{negative-gibbs} ($\beta=-1$) and MFI Hamiltonian ($B=0.5$, $G=0.5$, $\Delta=1$). 
For each value of $K$, each realization is generated according to the hierarchical sampling described in the text.
The plotted curves show the median over 20 realizations, with shaded regions indicating the interquartile range.
Dashed horizontal lines mark the true ergotropies. We considered a system of 5 qubits.}
    \label{fig:no_stat}
\end{figure}

{\it Examples.---}%
The following use cases illustrate how our approach yields useful lower bounds on the ergotropy across a wide range of informationally incomplete scenarios.
This includes considering several combinations of Hamiltonians and true states to illustrate its general applicability.

Let us consider an $n$-qubit system. The multiqubit Pauli strings $\{P_i\}$, consisting of all the $4^n$ tensor products of $\{\mathbb{1},\sigma^x,\sigma^y,\sigma^z\}$, form an orthogonal operator basis under the Hilbert–Schmidt inner product. Any density matrix or observable can therefore be expanded as a linear combination of Pauli strings.
The Hamiltonian can likewise be written as
$
    H = \sum_{i} h_i P_i ,
$
with real coefficients $h_i$ fixed by the physical model.
More specifically, we focus on the widely used family of spin-chain Hamiltonians of the form
\begin{eqnarray}
\nonumber
&&
H=-J_1 \sum_{i=1}^{n-1} \sigma^x_i\sigma^x_{i+1}
-J_2\sum_{i=1}^{n-2}\sigma^x_i\sigma^x_{i+2}
-B\sum_{i=1}^{n} \sigma_i^z\\
&&
-G \sum_{i=1}^{n} \sigma_i^x
-J_y \sum_{i=1}^{n-1} \sigma^y_i\sigma^y_{i+1}
-\Delta \sum_{i=1}^{n-1} \sigma^z_i\sigma^z_{i+1}\,,
\label{spin-local-ham}
\end{eqnarray}
having assumed open boundary conditions.
This encompasses paradigmatic many-body models, including the Axial Next-Nearest Neighbor Ising (ANNNI) model ($G=J_y=\Delta=0$), the XXZ Heisenberg model with transverse field ($G=J_2=0, J_y=J_1$), and the Mixed Field Ising (MFI) model ($J_1=J_2=J_y=0$). 

In each proof-of-principle example, we fix a reference state $\rho$ taken as the true state.
This state is only partially known via the input information (see Eq.\,\eqref{constraints-meas}).
The measurements that we choose are Pauli measurements, namely
$o_i = \tr{\rho \, {P}_i}$, $ i \in \mathcal{I}.$
We consider a structured random order for Pauli strings: first all one-body operators (in random order), then all two-body operators (in random order), and so on.
We refer to this random sampling strategy as \textit{hierarchical}.
With this choice, considering $K$ Pauli measurements corresponds to selecting the first $K$ operators from this body-ordered random list.
The lower bounds $\mathcal{E}_{\rm LB}$ that we {report}   (specifically, in Fig.\,\ref{fig:no_stat} and Fig.\,\ref{fig:real_stat}(a))
correspond to medians over many realizations, with shaded bands indicating the interquartile ranges.
Furthermore, in the step (i) of our protocol we minimize state purity ${\rm tr}(X^2)$, a choice that typically provides tighter bounds (see Appendix C and \cite{suppmat} for details).

\begin{figure}
    \centering
    \includegraphics[width=1\linewidth]{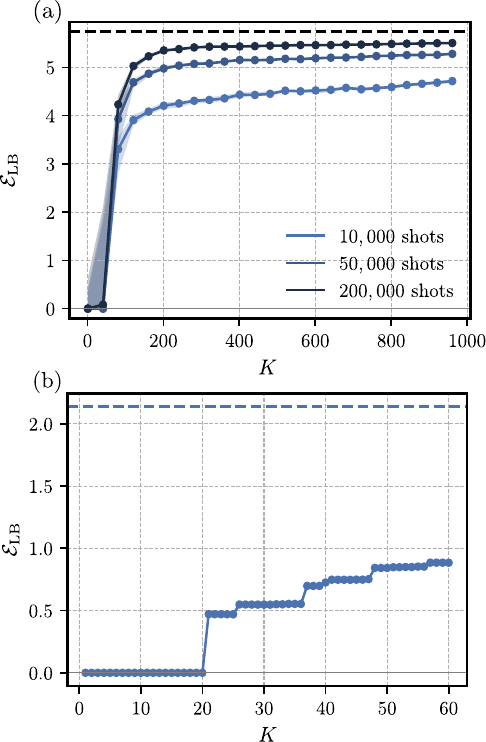}
    \caption{Probabilistic ergotropy lower bounds as a function of the number $K$ of measured observables (confidence level $1-\delta=0.997$).
(a) Effect of finite sampling on the  ergotropy lower bound, considering a  five-qubit ANNNI model ($J_1=1$, $J_2=-1$, $B=0.5$) and the true state \eqref{superposition}, superposition of extremal Hamiltonian eigenstates. 
Curves correspond to different numbers of simulated measurement shots per Pauli observable (see legend). 
For each $K$, the point shown is the median over 20 realizations using hierarchical sampling.
Shaded regions denote the interquartile range across realizations.
Dashed horizontal lines indicate the true ergotropies.
(b)\,Ergotropy lower bounds obtained from experimental data for a four-qubit GHZ state and considering XXZ Hamiltonian ($J_1=J_y=1$, $\Delta=0.5$, $B=0$). Here the Pauli constraints are incorporated in the fixed experimental order, without resampling; each expectation value is estimated from $2^{14}$ measurement shots. Each point corresponds to a single deterministic realization, and the optimal unitary is updated only when it yields a strictly tighter bound, ensuring monotonicity of the certified curve.
Here, the reference ergotropy (dashed line) is computed on the target GHZ state.}
\label{fig:real_stat}
\end{figure}

In
Fig.\,\ref{fig:no_stat}(a) we report ergotropy lower bounds $\mathcal{E}_{\rm LB}$
as a function of $K$, the number of Pauli measurements considered.
This is done for three distinct pure true states---GHZ, W, product state---namely $\rho=\ket{\psi}\bra{\psi}$,
with $\psi \in \{\text{GHZ, W, prod\}}$
and where
\begin{eqnarray}
\label{pure-states}
&& \ket{\text{GHZ}}
=
\frac{1}{\sqrt{2}}
\big(
\ket{0}^{\otimes n}
+
\ket{1}^{\otimes n}
\big)\,,\\
&& 
\ket{\text{W}}
=
\frac{1}{\sqrt{n}}
\big(
\ket{10\dots 0}
+
\ket{010\dots 0}
+
\cdots
+
\ket{0\dots 01}
\big),
\nonumber
\\
&&
\ket{ {\mathrm{prod}}}
=
\bigotimes_{i=1}^{n} \ket{0}    
\,.
\nonumber
\end{eqnarray}
We consider the XXZ Hamiltonian
and a system of $n=5$ qubits.
The bounds tighten monotonically as more Pauli constraints are included, abruptly converging toward the true ergotropy values.
This behavior is reminiscent of quantum compressed sensing theory \cite{PhysRevLett.105.150401}, whose domain are low rank states. 
Fig.\,\ref{fig:no_stat}(b) reports the same but considering as true state a negative-temperature Gibbs state,
\begin{equation}
\label{negative-gibbs}
\rho = \frac{e^{-\beta H}}{{\rm tr}(e^{-\beta H})}\,, \quad \beta < 0\,,
\end{equation}
and the MFI Hamiltonian.  
{Here the convergence to the
true value with the number of observables is smoother}, as a consequence of the state’s full-rank property.  
However, also in this case very few measurements are sufficient to retrieve nonzero values for $\mathcal{E}_{\rm LB}$, i.e. to tag the state as resourceful (see inset of Fig.\,\ref{fig:no_stat}(b)).

Fig.~\ref{fig:real_stat} shows how the certification protocol performs when applied to noisy measurement data, both synthetic and experimental.
Here we have probabilistic ergotropy lower bounds that hold true under a prescribed confidence level.
In Fig.~\ref{fig:real_stat}(a), 
we consider the ANNNI Hamiltonian and a true state prepared as a coherent superposition of extremal Hamiltonian eigenstates, namely $\rho=\ket{\psi}\bra{\psi}$ and 
\begin{equation}
\label{superposition}
\ket{\psi}=\frac{1}{\sqrt{2}}
(\ket{\epsilon_{1}}+\ket{\epsilon_{d}})\,,
\end{equation}
under finite measurement statistics. 
Simulated data are generated using a binomial distribution \cite{suppmat}.
Shot noise is incorporated through the confidence-interval constraints introduced earlier in Eq.\,\eqref{eq:epsilon}, which typically enlarge the feasible set and therefore loosen the certified bounds. As the number of shots increases, statistical uncertainty shrinks and the certified bounds concentrate around the ideal-statistics curve. Notably, even under strong shot noise (e.g., $10^4$ shots), the protocol reliably returns non-trivial lower bounds, demonstrating robustness to simulated measurement data.
Fig.~\ref{fig:real_stat}(b) presents instead results for experimental data corresponding to a four-qubit GHZ target state, and considering XXZ Hamiltonian. Data are obtained by performing experiments using the \texttt{ibm\_perth} quantum processor. 
Here the measured Pauli strings are imposed in the fixed order determined by the experiment, without random reshuffling\,\cite{suppmat}. Importantly, we use at most 60 Pauli strings, a small fraction of the total $256$ possible strings for a four-qubit state.
Moreover, to ensure the monotonicity of $\mathcal{E}_{\rm LB}$, the optimal unitary obtained at step (i) is updated only when it yields a strictly improved lower bound, otherwise the previously certified unitary is reused (see Appendix\,A).
Despite the presence of experimental imperfections and finite statistics ($2^{14}$ shots per Pauli observable), the bound tightens steadily as more constraints are included, ultimately certifying a substantial fraction of the true ergotropy. 
This demonstrates that the protocol remains effective in a fully realistic setting, where measurement noise and device errors coexist.

{\it Discussion.---}%
In summary, we have introduced a general protocol producing lower bounds on ergotropy under partial information.
The framework remains valid in the finite-statistics regime, where the bounds naturally incorporate shot noise and hold with a prescribed confidence level. 
This provides an experimentally relevant method for certifying the presence of extractable work in realistic quantum batteries.

Our work opens several directions for future research.
In particular, we have assumed perfect knowledge of the system Hamiltonian; a natural extension is to relax this assumption and combine the present protocol with Hamiltonian-learning techniques. 
Our certification of ergotropy is also directly relevant to platforms that engineer spin-chain models, including Rydberg-atom arrays and trapped-ion setups.
Furthermore, the approach could be extended to certify local ergotropies \cite{PhysRevA.107.012405, PhysRevLett.133.150402, PhysRevA.111.012212} leveraging scalable semidefinite-programming relaxations~\cite{Navascues2007NPA, navascues2008njp, PhysRevX.14.031006, hbrt-cn8q}. 

\vspace{.2cm}

{\it Acknowledgments.---}%
This work was supported by the Government of Spain (Severo Ochoa CEX2019-000910-S, FUNQIP, Quantum in Spain and European Union NextGenerationEU PRTR-C17.I1), Fundació Cellex, Fundació Mir-Puig, Generalitat de Catalunya (CERCA program), the ERC AdG CERQUTE, the EU (PASQuanS2.1 101113690 and Quantera Veriqtas projects), the AXA Chair in Quantum Information Science, 
and the PNRR MUR Project No. PE0000023-NQSTI.
E.P. acknowledges support from a “la
Caixa” Foundation (ID 100010434, code LCF/BQ/DI23/11990078). M.A. acknowledges funding from the European Union (QURES, 101153001). 
DF acknowledges financial support from University of Catania via PNRR-MUR Starting Grant project PE0000023-NQSTI.
We acknowledge the use of IBM Quantum services for this work. The views expressed are those of the authors, and do not reflect the official policy or position of IBM or the IBM Quantum team.

\vspace{.3cm}
\clearpage

{\Large\textbf{End Matter}}

\vspace{.3cm}
\noindent

\noindent
\textbf{Appendix A: Monotonicity.}
\label{app:monotonicity}
Despite intuitively adding
more information should provide a better bound $\mathcal{E}_{\rm LB}$, this is not guaranteed under a literal implementation of
steps (i) and (ii) of our protocol. However, such monotonicity can be restored using the following conditional criterion.

Let us consider two measurement constraint sets, identified by the index sets $\mathcal{I}$ and $\mathcal{I}'$, with
$\mathcal{I}\subseteq\mathcal{I}'$. Adding constraints reduces the set of compatible
states, namely $\Omega_{\mathcal{I}'}\subseteq\Omega_{\mathcal{I}}$.
When computing the bound \eqref{ergoLB} 
associated to $\Omega_{\mathcal{I}'}$, we can first evaluate it using the
unitary $\tilde{U}_\star$ that we have used when dealing with the set
$\Omega_{\mathcal{I}}$. We obtain necessarily a value $\mathcal{E}_{\rm LB}'(\mathcal{I'})\geq \mathcal{E}_{\rm LB}(\mathcal{I})$, hence the desired monotonic behavior.
This follows because the objective function in Eq.\,\eqref{ergoLB} is unchanged, but the minimization is performed over a subset of the previous feasible set.
In essence, this corresponds to skipping directly to step (ii) with the prior unitary.
We can then compute $\mathcal{E}_{\rm LB}(\mathcal{I'})$ following both steps (i), extracting another unitary $\tilde{U}_\star'$, and (ii), both based on the set 
$\Omega_{\mathcal{I}'}$.
Now the objective function in Eq.\,\eqref{ergoLB} can change if $\tilde{U}_\star'\neq \tilde{U}_\star$.
Accordingly,
if and only if   
$\mathcal{E}_{\rm LB}'(\mathcal{I'}) > \mathcal{E}_{\rm LB}(\mathcal{I'})$
we 
consider 
$\mathcal{E}_{\rm LB}'(\mathcal{I'})$ as our updated bound,
instead of the usual $\mathcal{E}_{\rm LB}(\mathcal{I'})$,
and we keep track of the fact that the optimal unitary remains the 
unitary we had already used for the set
$\Omega_{\mathcal{I}}$.
Iterating the routine provides the desired hierarchy of lower bounds.
\vspace{.3cm}

\noindent
\textbf{Appendix B: Confidence regions.}
\label{sec:appendixB}Following 
Refs.\,\cite{PhysRevLett.124.100401, zambrano2024certification}, we derive Eq.~\eqref{eq:epsilon} from Hoeffding’s inequality and the union bound, and show how our bounds obtained from Eq.~\eqref{ergoLB} are certified to hold with probability at least $1-\delta$.

We consider $K$ measured observables $\mathcal{O}_i$, each with eigenvalues in the interval $[-1, 1]$. Let $o_i = \tr{\rho \, \mathcal{O}_i}$ be the true expectation value and let $o_i^{\rm (est)}$ denote the empirical average obtained from $N_i$ measurement shots.
For a single observable $\mathcal{O}_i$, Hoeffding's inequality bounds the probability that $o_i^{\rm (est)}$ deviates from $o_i$ by more than $\epsilon_i$:
\begin{equation}
\Pr\big(|o_i^{\rm (est)} - o_i| \geq \epsilon_i\big) \le 2\exp\left(-{N_i\epsilon_i^2}/{2}\right).
\end{equation}

To ensure that all $K$ estimates simultaneously satisfy their bounds, we apply the union bound. We require the total failure probability to be no more than $\delta$:
\begin{equation}
\Pr\left(\bigcup_{i=1}^K \{|o_i^{\rm (est)} - o_i| > \epsilon_i\}\right) \le \sum_{i=1}^K \Pr\big(|o_i^{\rm (est)} - o_i| > \epsilon_i\big)  \le \delta.
\end{equation}
This is done by enforcing that each term is bounded as $2\exp(-{N_i\epsilon_i^2}/{2}) = {\delta}/{K}$. Solving this equation for $\epsilon_i$ yields the bound for the error of each empirical mean. Then, with probability at least $1-\delta$, all observables satisfy $|o_i^{\rm (est)} - o_i| \leq \epsilon_i$, where
$\epsilon_i$  
is given by Eq.\,\eqref{eq:epsilon}.

Incorporating these constraints in the feasible set $\Omega_{\mathcal{I}}$, Eq.\,\eqref{ergoLB}, guarantees that the true state $\rho$ is contained within this set, $\rho \in \Omega_{\mathcal{I}}$, with a probability of at least $1-\delta$. Consequently, since $\mathcal{E}_{\rm LB}$ is computed by minimizing over this certified set, the resulting bound $\mathcal{E}_{\rm LB} \leq \mathcal{E}(\rho, H)$ holds with a probability of at least $1-\delta$.
{%
\vspace{.3cm}
\noindent

\begin{figure}[H]
    \centering
    \includegraphics[width=0.9\linewidth]{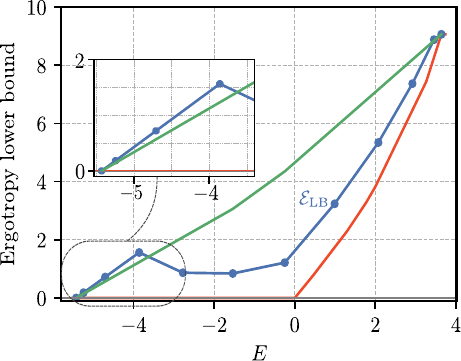}
    \caption{Ergotropy lower bound as a function of the mean energy using our two-step protocol (blue curve). 
The considered observables are the Pauli strings appearing in the Pauli decomposition of the Hamiltonian (associated to nonzero components).
We also show the Hamiltonian-only ergotropy lower bound of Ref.~\cite{canzio2025extracting} (red)
and
the energy-basis measurement bound of Eq~\eqref{theorem:IC} (green). 
We considered a four-qubit system and XXZ Hamiltonian ($J_1=J_y=-1$, $\Delta=-0.5$, $B=0$).
The true state is parametrized as a superposition of the lowest- and highest-energy eigenstates, $|\psi(s)\rangle \propto |\epsilon_1\rangle + s|\epsilon_d\rangle$.
Notably, our approach certifies the presence of coherent ergotropy at low energies, as highlighted in the inset, where the blue curve lies above the green one. }
    \label{fig:canzio}
\end{figure}


%
%

%
%
%
%
%
%
%
%
%
%
%
%

\noindent
\textbf{Appendix C: Measuring the mean energy.}\label{sec:appendixC}
In the feasible set $\Omega_\mathcal{I}$ defined in Eq.~\eqref{eq:feasible_set}, when the mean energy $\tr{HX}$ is fixed, certifying ergotropy reduces to finding the state $X\in\Omega_\mathcal{I}$ whose corresponding passive state $X_p$ has \emph{maximum} energy:
\begin{equation*}
\bar{\mathcal{E}}_{\rm LB}
\;=\;
[\tr{HX} - \max_{X \in \Omega_{\mathcal{I}}} \tr{H\,X_p}].
\end{equation*}
Note that the passive-state energy is a Schur-concave function~\cite{Alimuddin2020}, i.e.,
\begin{equation}\label{majorization}
    X \,\, \text{majorizes} \,\, Y \cite{Marshall} \, ( X \succ Y) \implies \tr{HX_p} \leq \tr{HY_p}.
\end{equation}
Therefore, one eventually looks for the \emph{most disordered} state in $\Omega_{\mathcal{I}}$ (equivalently, the state with the highest passive-state energy), since it certifies the minimum ergotropy over the feasible set under the fixed-mean-energy constraint.

In a generic scenario, however, the most disordered state in a feasible set need not be unique: there may exist multiple states whose passive states are \emph{incomparable} under majorization (i.e., neither majorizes the other). In such cases, analytically maximizing the passive-state energy---and hence certifying $\mathcal{E}_{\mathrm{LB}}$---can be challenging. Recently, Canzio et al.~\cite{canzio2025extracting} addressed this issue for feasible sets obtained from a {single} observable, the system Hamiltonian, and derived an analytic lower bound on the minimum ergotropy despite this non-uniqueness. It is also important to note that, for a quantum state $X$, ergotropy can be decomposed into incoherent (population) and coherent contributions~\cite{Francica'2020}:
\begin{equation}
    \mathcal{E}(X,H)=\mathcal{E}_{IC}(X)+\mathcal{E}_C(X),
\end{equation}
where $\mathcal{E}_C(X)=0$ when $X$ has no coherence in the energy eigenbasis. When only partial information is available through the Hamiltonian expectation value, the construction in~\cite{canzio2025extracting} yields a lower bound on the ergotropy that cannot exceed the incoherent ergotropy threshold, i.e. it is   $\le \mathcal{E}_{IC}(X)$ for the true (unknown) state $X$. In many cases, however, this bound is not tight. For instance, if the mean energy satisfies $\tr{HX}\le \tr{H}/d$, then there always exists a thermal state $\tau_\beta\in\Omega_\mathcal{I}$. This follows because, at fixed mean energy, the maximum-entropy (most disordered) state has the Gibbs form, i.e., it is a thermal state with $\beta=1/(k_BT)\ge 0$, and it can realize energies up to $\tr{H}/d$ in this regime. Consequently, the above fact immediately implies $\bar{\mathcal{E}}_{\mathrm{LB}}=0$.

In what follows, we explain how a fine-grained measurement in the energy eigenbasis mitigates this issue and leads to a stronger lower bound (see Fig.~\ref{fig:canzio}). Let $\{\mathcal{O}_i=\ket{\epsilon_i}\!\bra{\epsilon_i}~~s.t. ~~ \sum_i \mathcal{O}_i = I_d\}$ be the energy-basis measurement. The partial information obtained from an unknown state $X$ is the probability vector $\vec{p}=\{p_i\}$ with
\[
p_i \;=\; \tr{\ket{\epsilon_i }\!\bra{\epsilon_i }\,X},\qquad i=1,\dots,d~.
\]
With $H=\sum_{i=1}^d \epsilon_i \ket{\epsilon_i}\!\bra{\epsilon_i}$, every state $\omega\in\Omega_{\mathcal I}$ (i.e., compatible with $\vec{p}$) has the same mean energy:
\[
\tr{\omega H}\;=\tr{\rho^* H}\;=\; \sum_{i=1}^d p_i\,\epsilon_i \;=\; \tr{HX},
\]
where $\rho^* = \sum_i p_i \ket{\epsilon_i}\!\bra{\epsilon_i}$ is the dephased (energy-diagonal) state and belongs to $\Omega_{\mathcal I}$. In the next Lemma \ref{lemma1}, we show that for a fixed energy-basis distribution $\vec{p}$, the state $\rho^*$ is majorized by every other compatible state in $\Omega_{\mathcal I}$ and hence saturates the ergotropy lower bound. Moreover, the resulting value coincides exactly with the incoherent ergotropy $\mathcal{E}_{IC}(X)$. Finally, note that a thermal state need not reproduce the same probability distribution $\vec{p}$; consequently, it may lie outside the feasible region $\Omega_{\mathcal I}$. This is precisely why incorporating the full energy-basis measurement statistics can yield a strictly better bound than the Hamiltonian-only certification proposed in~\cite{canzio2025extracting}. 

\begin{lemma}\label{lemma1}
Given energy-basis probabilities $\vec p=(p_1,\dots,p_d)$ and the feasible set $\Omega_{\mathcal I}$, the ergotropy is minimized by the dephased state $\rho^\ast\in\Omega_{\mathcal I}$ and the tight lower bound is
\begin{equation}\label{theorem:IC}
\bar{\mathcal{E}}_{\mathrm{LB}}=\mathcal{E}(\rho^\ast,H)
=
\sum_{i=1}^d p_i\,\epsilon_i
-
\sum_{i=1}^d p_i^{\downarrow}\,\epsilon_i
=
\mathcal{E}_{IC}(X),
\end{equation}
where $\vec p^{\,\downarrow}$ is the nonincreasing rearrangement of $\vec p$.
\end{lemma}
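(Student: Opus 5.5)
Since every $\omega\in\Omega_{\mathcal I}$ reproduces the distribution $\vec p$ and hence has the same mean energy $\sum_i p_i\epsilon_i$, I will use the reduction already stated in Appendix C: minimizing the ergotropy over $\Omega_{\mathcal I}$ is the same as maximizing the passive-state energy $\tr{H\omega_p}$. Because of the Schur-concavity property \eqref{majorization}, it is enough to show that $\rho^\ast$ is majorized by every compatible state, i.e.\ $\omega\succ\rho^\ast$ for all $\omega\in\Omega_{\mathcal I}$. Then $\tr{H\rho^\ast_p}\ge\tr{H\omega_p}$ for all feasible $\omega$. Since $\rho^\ast\in\Omega_{\mathcal I}$, the minimum is attained at $\rho^\ast$.

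The key step is the majorization $\omega\succ\rho^\ast$, and I will obtain it from the Schur--Horn theorem. Write $\omega=\sum_k \lambda_k\ket{\lambda_k}\!\bra{\lambda_k}$ for the spectral decomposition. Then
\[
p_i=\bra{\epsilon_i}\omega\ket{\epsilon_i}=\sum_k |\braket{\epsilon_i|\lambda_k}|^2\lambda_k ,
\]
so $\vec p=B\vec\lambda$ with $B_{ik}=|\braket{\epsilon_i|\lambda_k}|^2$. The matrix $B$ is unistochastic and therefore doubly stochastic, so by Hardy--Littlewood--P\'olya $\vec p\prec\vec\lambda$. Equivalently, one can argue directly via Ky Fan: $\sum_{i\le m}p_i^\downarrow\le\sum_{k\le m}\lambda_k^\downarrow$ because the sum of any $m$ diagonal entries is bounded by the maximum of $\tr{P\omega}$ over rank-$m$ projectors $P$. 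The spectrum of $\rho^\ast$ is exactly $\vec p$, which gives the claim. One degeneracy needs care: if $H$ has repeated eigenvalues, the eigenbasis $\{\ket{\epsilon_i}\}$ is fixed by the measurement itself, so the argument goes through unchanged.

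Next I evaluate $\mathcal E(\rho^\ast,H)$ with the closed formula. The mean energy is $\sum_i p_i\epsilon_i$. The passive state puts the nonincreasing rearrangement $p_i^\downarrow$ on the energies $\epsilon_i$ ordered increasingly, which gives the middle expression in \eqref{theorem:IC}. Tightness is automatic because $\rho^\ast$ is itself feasible.

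Finally, I will identify this value with $\mathcal E_{IC}(X)$. Following the decomposition of Ref.~\cite{Francica'2020}, the incoherent ergotropy is the ergotropy of the dephased state, and it depends only on the populations $\vec p$. Since the true state $X$ has populations $\vec p$, it follows that $\mathcal E(\rho^\ast,H)=\mathcal E_{IC}(X)$. The main obstacle is only the majorization step; everything else is bookkeeping. For that step I will check the Schur-concavity direction carefully: a more mixed spectrum yields a higher passive energy, which is exactly what makes $\rho^\ast$ the minimizer.
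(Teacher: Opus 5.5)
Your proposal is correct and follows the paper's proof: the same reduction to maximizing the passive energy at fixed mean energy, the same key step $\omega\succ\rho^\ast$ combined with Schur-concavity, the same evaluation and tightness argument, and the same identification $\mathcal{E}(\rho^\ast,H)=\mathcal{E}_{IC}(X)$ via Francica et al. The only difference is in how you justify $\omega\succ\rho^\ast$: you prove it explicitly via Schur's theorem (the unistochastic matrix $B$ with Hardy--Littlewood--P\'olya, or Ky Fan), whereas the paper cites the general fact that the unital dephasing channel is mixing, which amounts to the same majorization.
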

The statement is rigorously proven in the  Supplemental Material \cite{suppmat}.
From both the above cases, we understand the following necessary condition for the observable set $\{\mathcal{O}_i\}_{i\in\mathcal I}$ to capture a nonzero \emph{coherent} ergotropy: one must choose observables that do not commute with the Hamiltonian and its energy basis. Moreover, if $\{\mathcal{O}_i\}_{i\in\mathcal I}$ contains mutually non-commuting observables (so $[\mathcal{O}_i,\mathcal{O}_j]\neq 0$ for some $i\neq j$), we access more information about the state and $\mathcal{E}_{LB}$ can increase further as the feasible set $\Omega_{\mathcal I}$ shrinks. It is possible to give a closed-form expression for $\bar{\mathcal{E}}_{LB}$ which is strictly greater than $\mathcal{E}_{IC}(X)$ for a canonical qubit system: there, a nonzero coherent ergotropy necessarily contributes, arising from two mutually non-commuting observables. Here again, the uniqueness of the most disordered state simplifies the problem \cite{suppmat}.

However, in a generic case—higher dimensions or many-body systems—the analytical solution is not always tractable. In general, the feasible set may fail to contain a single state that is majorized by all others which complicates the task of identifying the minimum-ergotropy state and of deriving nontrivial analytic bounds. In such settings, we adopt our generic {two-step protocol [Eqs.\,\eqref{sdp} and \eqref{ergoLB}]. 
In the first step (\ref{sdp}), we determine the state $\tilde{\rho}$ (and the associated unitary $\tilde{U}_{\star}$) by
minimizing state purity (or, equivalently, maximizing the linear entropy) over the feasible set $\Omega_{\mathcal I}$. 
Although $\tilde{\rho}$ may not have the highest passive-state energy, the maximization of linear entropy provides a unique solution $\tilde{\rho}$ and makes it sufficiently disordered (typically yielding higher-rank states), which imposes nontrivial restrictions on the corresponding unitary $\tilde{U}_{\star}$. 
In the second step (\ref{ergoLB}), we fix this unitary and obtain the lower bound by solving the semidefinite program over $\Omega_{\mathcal I}$.

Although our formalism is quite generic and does not require a fixed mean-energy constraint, in the following we illustrate how it works under the fixed-energy constraint. 
We consider a four-qubit system with an XXZ Hamiltonian and choose the Pauli strings appearing in the Pauli decomposition of the Hamiltonian
$
H=\sum_{i \in \mathcal{I}} h_i P_i 
$
as our observables. 
Notice that the sum is over the set $\mathcal{I}$, that in this case also identifies the nonzero Pauli components of $H$.
For Hamiltonians as in \eqref{spin-local-ham}, this implies $K\sim O(n)$.
This choice immediately enforces the energy constraint; however, the observables need not commute with each other, nor with the Hamiltonian. As a demonstration, we choose a probe state $|\psi(t)\rangle \propto |\epsilon_1\rangle + t|\epsilon_d\rangle$, parametrically spanning energies between the ground and the highest excited levels. In Fig.~\ref{fig:canzio}, we explore how this two-step protocol
compares with the Hamiltonian-based protocol~\cite{canzio2025extracting} and the energy-basis measurement protocol (Eq.~\eqref{theorem:IC}) for certifying the minimum ergotropy. Note that our protocol is not tight in general; nevertheless, it yields a better ergotropy lower bound than the Hamiltonian-based method and, at low energies, can certify the presence of coherent ergotropy.
}

\bibliography{sample}

\clearpage
\onecolumngrid
\appendix

\begin{center}
\textbf{\large Supplemental Material}
\end{center}
\vspace{0.3cm}

\setcounter{equation}{0}
\renewcommand{\theequation}{s\arabic{equation}}

\section*{S1. Minimum purity as a semidefinite program}
\label{supp_mat:min_purity}
Semidefinite programs are quite general in the sense that many problems not originally posed as SDPs admit equivalent SDP formulations.
One such example is the minimum-purity problem.
The minimum purity over the feasible set $\Omega_{\mathcal{I}}$, 
\begin{align}
\min_{X} \;& {\rm tr}(X^2) \\
&\text{s.t. } X \in \Omega_{\mathcal{I}}\,,
\nonumber
\end{align}
can be cast as the semidefinite program,
\begin{align}
\min_{X,\, Y} \;& {\rm tr}(Y) \\
\text{s.t. } \;& 
\begin{pmatrix}
Y & X \\
X & \mathbb{1}
\end{pmatrix} \succeq 0 \, , \quad
X \in \Omega_{\mathcal{I}} .
\nonumber
\end{align}
Taking an optimal solution $(X^\#,Y^\#)$ of this program, we can use $X^\#$ as the feasible $\tilde{\rho}$ for our step (i).

\section*{S2. Data generation via binomial statistics}
Concretely, in our setting, an observable $\mathcal{O}_i$ is often taken to be a multiqubit Pauli string $P_i$ (see Figures in the main text), for example $\sigma_x \otimes \cdots \otimes \sigma_z \otimes \mathbb{1}$. 
In particular, for Fig.\,\ref{fig:real_stat}(a),
simulated measurement data are generated assuming projective measurements of the Pauli strings $\{P_i\}_{i \in \mathcal{I}}$, whose outcomes are $\pm1$.
For a given true state $\rho$, the ideal expectation value is $\langle {P_i} \rangle = {\rm tr}(\rho {P_i})$. Each measurement shot is modeled as an independent Bernoulli trial with probabilities
\begin{align}
p_{i}{}_{+} = \frac{1+\langle {P_i} \rangle}{2}, \qquad
p_{i}{}_{-} = \frac{1-\langle {P_i} \rangle}{2}.
\end{align}
For a total of $N_{i}{}$ measurement shots, the number of $+1$ outcomes is drawn from a binomial distribution $\mathrm{Bin}(N_{i}{},p_{i}{}_{+})$. The empirical expectation value is then estimated as the arithmetic mean
\begin{align}
o^{\rm (est)}
=
\frac{N_{i}{}_{+}-N_{i}{}_{-}}{N}
=
\frac{2N_{i}{}_{+}}{N_{i}{}}-1 ,
\end{align}
where $N_{i}{}_{+}$ and $N_{i}{}_{-}=N_{i}{}-N_{i}{}_{+}$ denote the number of $+1$ and $-1$ outcomes, respectively.

\section*{S3. Pauli strings for the experimental GHZ state}
\label{supp_mat:GHZ_correlators}

For the experimental certification of a four-qubit GHZ state using the \texttt{ibm\_perth} quantum processor, we measured up to 60 four-qubit Pauli observables.  

The full list is  
$$
\begin{aligned}
\{ &
\text{YYYY}, \text{XXXX}, \text{ZXXY}, \text{YZYX}, \text{XZYZ}, \text{ZYZX}, \text{ZYYZ}, \text{YYYY},\\
& \text{ZXXY}, \text{XXXX}, \text{YZYX}, \text{ZYZX}, \text{XZYZ}, \text{ZYZY}, \text{ZXXY}, \text{YXZZ},\\
& \text{XYXX}, \text{ZZZZ}, \text{XZXZ}, \text{ZYZX}, \text{XXYY}, \text{YZXY}, \text{ZZYX}, \text{XYXX},\\
& \text{YZZZ}, \text{ZZYY}, \text{ZZZY}, \text{YZXY}, \text{XZZY}, \text{ZYYX}, \text{YXXX}, \text{ZZYY},\\
& \text{XYZY}, \text{XXZZ}, \text{XZXY}, \text{ZYYX}, \text{YXXZ}, \text{YYXZ}, \text{XZXX}, \text{XXZZ},\\
& \text{XXXY}, \text{YZZY}, \text{ZYYY}, \text{YYXZ}, \text{YYXX}, \text{YZZY}, \text{YZXZ}, \text{YXYX},\\
& \text{YXYX}, \text{YZYZ}, \text{ZZXZ}, \text{ZXXZ}, \text{XYXZ}, \text{ZXXZ}, \text{ZXZZ}, \text{YZZX},\\
& \text{XZZX}, \text{YZZX}, \text{XYYZ}, \text{XYZZ} 
\}.
\end{aligned}
$$
Considering $K$ observables (abscissa of Fig.\,\ref{fig:real_stat}(b)) here means considering the first $K$ entries of the above list, and estimating their expectation values. For a given $K$, this defines the set $\Omega_\mathcal{I}$ of our two-step protocol in the presence of finite statistics (see Eq.\,\eqref{feasible-set-conc-ineq}).

\section*{S4. Proof of Lemma 1}
We report the proof of Lemma 1 associated to Eq.\,\eqref{theorem:IC}.
\begin{proof}
Let $\omega\in\Omega_{\mathcal I}$ be arbitrary. The completely dephasing map in the energy basis acts as follows:
\[
\mathcal D(\omega)=\sum_{i=1}^d \bra{\epsilon_i}\omega\ket{\epsilon_i}\,\ket{\epsilon_i}\!\bra{\epsilon_i}.
\]
By construction, $\mathcal D(\omega)=\rho^\ast$ for all $\omega\in\Omega_{\mathcal I}$. Since $\mathcal D$ is unital, it is a mixing channel; hence $\omega \succ \rho^\ast$ (majorization)~\cite{Marshall,Nielsen2002}. From Eq.~\ref{majorization}, we get $\tr{\omega_{\!p}H}\le \tr{\rho^\ast_{\!p}H}$.

All states in $\Omega_{\mathcal I}$ have the same average energy $\sum_i p_i\epsilon_i$, so the ergotropy is minimized when the passive energy is maximized, i.e., at $\rho^\ast$.
For a Hamiltonian $H$ with $\epsilon_1\le\cdots\le\epsilon_d$, the passive state of $\rho^\ast$ is obtained by sorting its populations in non-increasing order; hence
$\tr{\rho^\ast_{\!p}H}=\sum_{i=1}^d p_i^{\downarrow}\,\epsilon_i,$
which yields
\[
\bar{\mathcal{E}}_{\rm LB}=\min_{\omega\in\Omega_{\mathcal I}} \mathcal E(\omega)
=\mathcal E(\rho^\ast, H)
=\sum_{i=1}^d p_i\,\epsilon_i - \sum_{i=1}^d p_i^{\downarrow}\,\epsilon_i.
\]
The bound is tight because it is attained by $\rho^\ast$. Finally, by definition, the incoherent ergotropy of $X$ equals the ergotropy of its dephased state in the energy basis~\cite{Francica'2020}, 
so $\bar{\mathcal{E}}_{\rm LB}=\mathcal E(\rho^\ast, H)=\mathcal{E}_{IC}(X)$.
\end{proof}

\section*{S5. Qubit case: nonzero coherent ergotropy}\label{sec:AppendixC}

When the Hamiltonian is diagonal in the computational basis, the mean energy of a qubit is fully determined by the expectation value of Pauli $\sigma^z$. However, measuring $\sigma^z$ alone cannot certify any \emph{coherent} contribution to ergotropy, since the off-diagonal elements in the energy basis remain unconstrained. To certify a coherence-induced contribution, one must also measure an observable that does not commute with $\sigma^z$, e.g., Pauli $\sigma^x$.

Assume that for an unknown qubit state $\rho$ we have measured
\[
x^\ast=\tr{\sigma^x\rho},\qquad z^\ast=\tr{\sigma^z\rho}.
\]
Any state $\omega\in\Omega_{\mathcal I}$ consistent with these constraints can be written as
\begin{equation}
\omega(y)\;=\;
\begin{pmatrix}
\frac{1+z^\ast}{2} & \frac{x^\ast+ i y}{2} \\
\frac{x^\ast- i y}{2} & \frac{1-z^\ast}{2}
\end{pmatrix},
\qquad
|y|\;\le\;\sqrt{1-(x^\ast)^2-(z^\ast)^2},
\end{equation}
where $y=\tr{\sigma^y \omega}$ is the remaining free parameter and is unknown to us. Thus, the feasible set is the line segment along the $Y$-axis inside the Bloch ball determined by fixing $x^\ast$ and $z^\ast$. The endpoints
\[
y=\pm \sqrt{1-(x^\ast)^2-(z^\ast)^2}
\]
are pure, and we denote the corresponding states by $\ket{y_\pm}\!\bra{y_\pm}$.

The midpoint of the segment, obtained at $y=0$, is denoted by $\rho^\ast_{x^\ast,z^\ast}$ and is given by
\begin{equation}\label{eq:rho-star-xz}
\rho^\ast_{x^\ast,z^\ast} \;=\;
\begin{pmatrix}
\frac{1+z^\ast}{2} & \frac{x^\ast}{2} \\
\frac{x^\ast}{2} & \frac{1-z^\ast}{2}
\end{pmatrix}
\;=\;
\frac{1}{2}\ket{y_+}\!\bra{y_+}+\frac{1}{2}\,U\,\ket{y_+}\!\bra{y_+}\,U^\dagger,
\end{equation}
where $U=\exp\!\bigl(-i\frac{\pi}{2}\,\hat n\cdot\vec\sigma\bigr)$ is the $\pi$-rotation about the Bloch axis
$\hat n\propto (x^\ast,0,z^\ast)$. This unitary maps $\ket{y_+}$ to $\ket{y_-}$ while leaving $x^\ast$ and $z^\ast$ invariant.

\begin{lemma}
For a qubit with Hamiltonian diagonal in the computational basis, and constraints $x^\ast=\tr{\sigma^x\rho}$ and $z^\ast=\tr{\sigma^z\rho}$, the minimum ergotropy over the feasible set $\Omega_{\mathcal I}$ is attained at $\rho^\ast_{x^\ast,z^\ast}$, and the lower bound is tight:
\[
\bar{\mathcal{E}}_{\mathrm{LB}} \;=\; \min_{\omega\in\Omega_{\mathcal I}}\mathcal{E}(\omega)
\;=\; \mathcal{E}\!\left(\rho^\ast_{x^\ast,z^\ast,},H\right).
\]
Moreover, \[\bar{\mathcal{E}}_{\mathrm{LB}} > \mathcal{E}_{IC}(\omega), \,\,~~\text{if and only if} \,\,~~ x^* > 0.\]  where $\omega$ is the true unknown state.
\end{lemma}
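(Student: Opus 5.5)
The plan follows the route of Lemma~\ref{lemma1}: show that $\rho^\ast_{x^\ast,z^\ast}$ is majorized by every compatible state, and use the fact that all feasible states share the same mean energy.

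\emph{Mean energy is fixed.} Since $H$ is diagonal in the computational basis, we can write $H=\frac{\epsilon_1+\epsilon_2}{2}\mathbb{1}-\frac{\Delta}{2}\sigma^z$ with $\Delta=\epsilon_2-\epsilon_1\ge0$. Hence
\[
\tr{\omega(y)H}=\tfrac{\epsilon_1+\epsilon_2}{2}-\tfrac{\Delta}{2}z^\ast
\]
for every $y$. As in Appendix C, minimizing ergotropy over $\Omega_{\mathcal I}$ then reduces to maximizing the passive energy $\tr{H\omega_p}$.

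\emph{Majorization.} Use the decomposition in Eq.~\eqref{eq:rho-star-xz} and the fact that the $\pi$-rotation $U$ preserves $x^\ast,z^\ast$ while sending $y\mapsto -y$. Then
\[
\rho^\ast_{x^\ast,z^\ast}=\tfrac12\,\omega(y)+\tfrac12\,U\omega(y)U^\dagger
\]
for every feasible $\omega(y)$, which is a mixture of unitary conjugates. A mixture of unitary conjugates is majorized by the state itself, so $\omega(y)\succ\rho^\ast_{x^\ast,z^\ast}$. For a qubit this can also be seen directly: the eigenvalues are $(1\pm r)/2$ with Bloch length $r=\sqrt{x^{\ast2}+y^2+z^{\ast2}}$, and $r$ is minimized at $y=0$.

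\emph{Minimizer and value.} By Schur-concavity, Eq.~\eqref{majorization}, $\rho^\ast_{x^\ast,z^\ast}$ has the largest passive energy on $\Omega_{\mathcal I}$. It is therefore the minimizer, and the bound is tight because $\rho^\ast_{x^\ast,z^\ast}\in\Omega_{\mathcal I}$. Computing explicitly,
\[
\mathcal{E}(\omega(y),H)=\tfrac{\Delta}{2}\,(r-z^\ast),
\]
since the passive energy is $\frac{\epsilon_1+\epsilon_2}{2}-\frac{\Delta}{2}r$. This gives
\[
\bar{\mathcal{E}}_{\rm LB}=\tfrac{\Delta}{2}\Bigl(\sqrt{x^{\ast2}+z^{\ast2}}-z^\ast\Bigr).
\]

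\emph{Comparison with incoherent ergotropy.} The dephased state has Bloch vector $(0,0,z^\ast)$, so
\[
\mathcal{E}_{IC}(\omega)=\tfrac{\Delta}{2}\bigl(|z^\ast|-z^\ast\bigr),
\]
independent of the unknown $y$. Since $\sqrt{x^{\ast2}+z^{\ast2}}\ge|z^\ast|$, with strict inequality exactly when $x^\ast\neq0$, we get $\bar{\mathcal{E}}_{\rm LB}>\mathcal{E}_{IC}(\omega)$ if and only if $x^\ast\ne0$ (assuming $\Delta>0$).

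\emph{Main obstacle.} The hardest point is reconciling this with the stated condition ``$x^\ast>0$''. The natural reading is $|x^\ast|>0$, since the ergotropy depends only on $x^{\ast 2}$. I would state that interpretation explicitly, noting that a sign convention or reflection $\sigma^x\to-\sigma^x$ leaves the claim invariant, and also note the non-degeneracy assumption $\Delta>0$.
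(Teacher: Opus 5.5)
Your proposal is correct and follows essentially the paper's route: the mixing channel $\tfrac12(\omega+U\omega U^\dagger)$ sends every feasible state to $\rho^\ast_{x^\ast,z^\ast}$, which gives majorization; with the mean energy fixed by $z^\ast$, Schur-concavity of the passive energy yields the minimizer; and the gain over the dephased state is then computed explicitly. Your closed forms $\bar{\mathcal{E}}_{\rm LB}=\tfrac{\Delta}{2}\bigl(\sqrt{(x^\ast)^2+(z^\ast)^2}-z^\ast\bigr)$ and $\mathcal{E}_{IC}=\tfrac{\Delta}{2}(|z^\ast|-z^\ast)$ are right, and reading the condition as $x^\ast\neq0$ (with $\Delta>0$) is correct: the paper's literal ``$x^\ast>0$'' fails for negative $x^\ast$, and its unsorted gap $\sum_i(p_i-\lambda_i)\epsilon_i$ is valid only for $z^\ast\ge0$, whereas your $|z^\ast|$ handles both signs.
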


\begin{proof}
Averaging over $\{\,\mathbb{I},U\,\}$ defines a unital (mixing) channel that sends any $\omega(y)$ to $\rho^\ast_{x^\ast,z^\ast}$:
\[
\mathcal T(\omega)=\tfrac12\bigl(\omega+U\omega U^\dagger\bigr)=\rho^\ast_{x^\ast,z^\ast}.
\]
Since $\mathcal T$ is mixing, $\omega \succ \rho^\ast_{x^\ast,z^\ast}$ (majorization) for all $\omega\in\Omega_{\mathcal I}$. For qubit Hamiltonians diagonal in the computational basis, majorization implies an ordering of passive energies, hence
\[
\tr{\omega_p H}\le \tr{(\rho^\ast_{x^\ast,z^\ast})_p H}.
\]
All states in $\Omega_{\mathcal I}$ share the same average energy (fixed by $z^\ast$), so minimizing ergotropy over $\Omega_{\mathcal I}$ is equivalent to maximizing the passive energy. Therefore, the minimum is attained at $\rho^\ast_{x^\ast,z^\ast}$ and
\[
\bar{\mathcal{E}}_{\mathrm{LB}}=\min_{\omega\in\Omega_{\mathcal I}}\mathcal{E}(\omega)=\mathcal{E}\!\left(\rho^\ast_{x^\ast,z^\ast},H\right),
\]
which proves tightness.

For the coherent part, note that $\rho^\ast_{z^\ast}$ is the dephased state. As shown in Eq.~\ref{theorem:IC}, we already know that the certified ergotropy satisfies $\mathcal{E}(\rho^\ast_{z^\ast},H)=\mathcal{E}_{IC}(\omega)$, where $\omega$ is taken to be the true state. Hence, the difference $
\mathcal{E}\!\left(\rho^{\ast}_{x^\ast,z^\ast},H\right)-\mathcal{E}\!\left(\rho^{\ast}_{z^\ast},H\right)$ 
quantifies the certified gain beyond the incoherent contribution, and it provides a lower bound on the coherently extractable ergotropy $\Delta_C(\omega)$.

The bound can be evaluated explicitly as
\begin{align}
\Delta_C (\omega)
&\geq \mathcal{E}\!\left(\rho^{\ast}_{x^\ast,z^\ast},H\right)
 - \mathcal{E}_{IC}(\omega) \\
&= \mathcal{E}
\begin{pmatrix}
\tfrac{1+z^\ast}{2} & \tfrac{x^\ast}{2} \\
\tfrac{x^\ast}{2} & \tfrac{1-z^\ast}{2}
\end{pmatrix}
-
\mathcal{E}
\begin{pmatrix}
\tfrac{1+z^\ast}{2} & 0 \\
0 & \tfrac{1-z^\ast}{2}
\end{pmatrix} \\
&= \sum_i (p_i - \lambda_i)\,\epsilon_i > 0 ~~~~\text{if and only if $x^\ast > 0$ in the unknown true state $\omega$}.
\end{align}
Here $\{\epsilon_i\}$ are the system’s energy levels (assume $\epsilon_0 \le \epsilon_1$).
The diagonal state $\rho^{\ast}_{z^\ast}$ has populations
\[
p_0=\frac{1+z^\ast}{2},\qquad
p_1=\frac{1-z^\ast}{2},
\]
while the eigenvalues of $\rho^{\ast}_{x^\ast,z^\ast}$ are
\[
\lambda_{0,1}=\frac{1\pm \sqrt{(x^\ast)^2+(z^\ast)^2}}{2}.
\]
\end{proof}

\clearpage

\end{document}